\definecolor{darkred}  {rgb}{0.5,0,0}
\definecolor{darkblue} {rgb}{0,0,0.5}
\definecolor{darkgreen}{rgb}{0,0.5,0}
\newcommand{\be}{\begin{equation}}
	\newcommand{\ee}{\end{equation}}
\newcommand{\ba}{\begin{array}}
	\newcommand{\ea}{\end{array}}
\newcommand{\bea}{\begin{eqnarray}}
	\newcommand{\eea}{\end{eqnarray}}
\newcommand{\ra}{\rangle}
\newcommand{\la}{\langle}
\newcommand{\calC}{{\cal C }}
\newcommand{\calS}{{\cal S }}
\newcommand{\EE}{\mathbb{E}}
\newcommand{\CC}{\mathbb{C}}
\newtheorem{lemma}{Lemma}
\newenvironment{lemmaprime}[1]
{
	\addtocounter{lemma}{-1}
	\begin{lemma}}
	{\end{lemma}}
\newenvironment{lemmacopy}[1]
{
	\addtocounter{lemma}{-1}
	\begin{lemma}}
	{\end{lemma}}
\newtheorem{theorem}{Theorem}
\newcommand{\ketbra}[1]{\ket{#1}\bra{#1}}
\newcommand{\defeq}{\stackrel{\text{def}}{=}}
\begin{document}

	\title{How Much Entanglement Is Needed for Quantum Error Correction?}
	
	\author{Sergey Bravyi}
	\affiliation{IBM Quantum, IBM T. J. Watson Research Center, Yorktown Heights, New York 10598, USA}
	
	\author{Dongjin Lee}
	\affiliation{Perimeter Institute for Theoretical Physics, Waterloo, Ontario N2L 2Y5, Canada}
	\affiliation{Department of Physics and Astronomy, University of Waterloo, Waterloo, Ontario N2L 3G1, Canada}

	\author{Zhi Li}
	\affiliation{Perimeter Institute for Theoretical Physics, Waterloo, Ontario N2L 2Y5, Canada}
        \affiliation{National Research Council Canada, Waterloo, Ontario N2L 3W8, Canada}

	\author{Beni Yoshida}
	\affiliation{Perimeter Institute for Theoretical Physics, Waterloo, Ontario N2L 2Y5, Canada}

	\begin{abstract}

		It is commonly believed that logical states of quantum error-correcting codes have to be highly entangled such that codes capable of correcting more errors require more entanglement to encode a qubit. 
        Here, we show that the validity of this belief depends on the specific code and the choice of entanglement measure.
        To this end, we characterize a tradeoff between the code distance $d$ quantifying the number of correctable errors, and the geometric entanglement measure of logical states quantifying their maximal overlap with product states or more general ``topologically trivial" states. The maximum overlap is shown to be exponentially small in $d$ for three families of codes: (1) low-density parity check codes with commuting check operators, (2) stabilizer codes, and (3) codes with a constant encoding rate. 
		Equivalently, the geometric entanglement of any logical state of these codes grows at least linearly with $d$. On the opposite side, we also show that this distance-entanglement tradeoff does not hold in general. For any constant $d$ and $k$ (number of logical qubits), we show there exists a family of codes such that the geometric entanglement of some logical states approaches zero in the limit of large code length.
	\end{abstract}
	
	\maketitle
	
	Quantum error correction \cite{shor1995scheme} and entanglement theory are vibrant research fields that are often closely interlinked.
	A deep connection between the two fields stems from the hallmark feature of entangled states known as local indistinguishability \cite{knill1997theory,bennett1996mixed}---two entangled states may be perfectly distinguishable globally while looking identical for any local observer who can only examine few-qubit subsystems.
	Accordingly, a logical qubit encoded into entangled states that span multiple physical qubits and satisfy the local indistinguishability property becomes protected from all few-qubit errors enabling error correction.
	
	The exchange of ideas between these fields has led to several breakthroughs.
	For example, the study of many-body entangled states with a topological quantum order~\cite{wen1990ground} has led to the development of fault-tolerant architectures based on the surface code~\cite{kitaev2003fault,dennis2002topological}.
	Conversely, coding theory constructions such as holographic quantum codes~\cite{pastawski2015holographic,almheiri2015bulk} provided a fruitful perspective on the connection between geometry and entanglement in quantum gravity.
	More recently, one of the most significant open questions about many-body entanglement---the existence of local Hamiltonians without low-energy topologically trivial states~\cite{freedman2013quantum}---has been resolved by employing the construction of good low density parity check (LDPC) codes~\cite{anshu2023nlts,panteleev2022asymptotically}.

	Given the profound role of entanglement in quantum coding theory,
	a natural question is {\em how much entanglement is needed for quantum error correction?}
	Here, we address this question by exploring distance-entanglement tradeoffs for quantum codes.
	Recall that a quantum code encoding $k$ logical qubits into $n$ physical qubits is simply a linear subspace $\calC$ of dimension $2^k$ embedded into the $2^n$-dimensional Hilbert space describing $n$ physical qubits. 
	Any normalized vector $|\psi\ra\in \calC$ represents a logical (encoded)  state.
	The error-correcting capability of a code is usually measured by its {\em distance}.
 A code is said to have distance at least $d$ if logical states are indistinguishable on any subset of less than $d$ physical qubits. 
	Namely, for any fixed subset of less than $d$ qubits, the reduced density matrices are the same of all different logical states. 
	A distance-$d$ code can correct any error affecting less than $d/2$ physical qubits~\cite{knill1997theory,bennett1996mixed,nielsen2010quantum}. 
	It is easy to check that logical states of nontrivial codes ($k\ge 1$ and $d\ge 2$) must have {\em some} entanglement~\footnote{Indeed, if $d\ge 2$ then all logical states must have the same single-qubit marginals.
		However, any state sharing the same single-qubit marginals with an unentangled (product) state must be equal to the unentangled state itself.
		Hence, a code with $k\ge 1$ and $d\ge 2$ cannot have an unentangled logical state.}. But what is the minimum amount of entanglement required for a distance-$d$ code?

    We provide a fairly complete answer to this question.	  
    Using a generalized ``geometric entanglement measure" (GEM) to quantify entanglement, we establish a distance-entanglement tradeoff for three pivotal families of quantum error-correcting codes: (1) LDPC codes, (2) stabilizer codes, and (3)  codes with a constant encoding rate $k/n$.
    For codes in these families, our results show that the geometric entanglement must grow at least linearly with the code distance (or even the code length).
    Our bounds have practical implications for quantum code design: 
    realizing codes with the geometric entanglement sublinear in the code distance may be more challenging as they would require high-weight non-Pauli check operators and introduce significant encoding overhead.
    
    Conversely, we demonstrate that this distance-entanglement tradeoff is not universal.
    For any given code parameters $k$ and $d$, there exists a family of codes where the geometric entanglement of certain logical states vanishes as the code length $n$ grows.

    To quantify the entanglement of $n$-qubit states, our starting point is the geometric entanglement measure~\cite{shimony1995degree,barnum2001monotones,wei2003geometric,orus2014geometric},
    which measures the maximum overlap between a given $n$-qubit state and product states.
	Unfortunately, this measure can be ``spoofed" by states exhibiting only short-range entanglement.
	For example, a tensor product of $n/2$ Einstein-Podolsky-Rosen pairs scores high on the geometric entanglement even though
	this state can be disentangled by a single layer of controlled-\textsc{not} gates.
	To detect genuine many-body entanglement we generalize the geometric entanglement beyond product states.
	For any $h\ge 1$ define a depth-$h$ quantum circuit acting on $n$ qubits as a composition of $h$ layers of two-qubit gates such that 
	gates within each layer are nonoverlapping.
	We allow two-qubit gates acting on any pair of qubits (all-to-all qubit connectivity).
	Depth-$0$ circuits are defined as products of single-qubit gates. 
	Define the depth-$h$ geometric entanglement measure of an $n$-qubit state $|\psi\ra$ as
	\be
	\label{E_h}
	E_h(\psi) =  -\max_{U\, : \, \mathrm{depth}(U)= h}\;   \log_2{|\la \psi| U|0^n\ra|^2},
	\ee
	where the maximum is taken over all $n$-qubit depth-$h$ circuits.
	By definition, $E_h(\psi)\in [0,n]$ and $E_h(\psi)=0$ if and only if $|\psi\ra$ can be prepared by a depth-$h$ circuit.
	We shall be interested in the regime when $h=O(1)$ is a constant independent of $n$.
	Then $E_h(\psi)$ quantifies the maximum overlap between $|\psi\ra$ and ``topologically trivial" states~\cite{freedman2013quantum}, i.e. 
	states prepared by a constant-depth circuit starting from product states. 
	The standard GEM ~\cite{shimony1995degree,barnum2001monotones,wei2003geometric,orus2014geometric} coincides with $E_0(\psi)$.
	We shall also consider Clifford GEM $E_h^{C}(\psi)$ defined by Eq.~(\ref{E_h}) with a restriction that $U$ contains only Clifford gates.
	
	\section{Main results}
	First, consider quantum LDPC (qLDPC) codes.
	Such codes are often used to model systems with topological quantum order, see e.g.~\cite{kitaev2003fault,bravyi2010topological}.
	The code space $\calC$ of a qLDPC code with $n$ physical qubits  is defined as
	\be
	\calC= \{ |\psi\ra \in (\CC^2)^{\otimes n} \, : \, \Pi_a|\psi\ra =|\psi\ra~~ \forall \; a=1,\ldots,m\}
	\ee
	where $\Pi_1,\ldots,\Pi_m$ are pairwise commuting Hermitian projectors.
	Equivalently, $\calC$ is the ground subspace of a gapped Hamiltonian $H=-\sum_{a=1}^m \Pi_a$. 
	The code is said to have {\em sparsity} $s$ if
	each projector $\Pi_a$
	acts nontrivially on at most $s$ qubits and each qubit participates in at most $s$ projectors.
	The LDPC condition demands that $s$ is a constant independent of $n$.

	Our first result is the distance-entanglement tradeoff for qLDPC codes. 
	\begin{theorem}
		\label{thm:LDPC}
		For any $d>s^42^{5h}$, any $s$-sparse, distance-$d$ qLDPC code $\calC$, and any logical state $|\psi\ra\in \calC$, one has
		\be
		\label{bound_LDPC}
		E_h(\psi) \ge \alpha d,
		\ee
		where $\alpha>0$ is a constant that depends only on $s$ and $h$. 
	\end{theorem}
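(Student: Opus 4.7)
The plan is to reduce to a canonical configuration via circuit conjugation, bound the overlap between $|0^n\rangle$ and the rotated code space by a product over disjoint stabilizers, and then lower-bound the resulting total defect.

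\textbf{Step 1: Light-cone reduction.} Let $U$ attain the maximum in Eq.~(\ref{E_h}), so $|\langle 0^n|U^\dagger|\psi\rangle|^2 = 2^{-E_h(\psi)}$. Absorb $U$ into the code by passing to the conjugated code $\calC' := U^\dagger \calC$, with commuting projectors $\Pi_a' := U^\dagger \Pi_a U$ and rotated logical state $|\psi'\rangle := U^\dagger|\psi\rangle$. A standard light-cone argument shows each $\Pi_a'$ is supported on at most $s\cdot 2^h$ qubits and each qubit appears in at most $s\cdot 2^h$ rotated projectors, so $\calC'$ is $s'$-sparse with $s' = s\cdot 2^h$; its distance is at least $d' = d/2^h$. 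The goal reduces to showing $|\langle 0^n|\psi'\rangle|^2 \le 2^{-\Omega(d')}$.

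\textbf{Step 2: Product-of-projectors bound.} Because all $\Pi_a'$ commute and fix $|\psi'\rangle$, for any subset $S$ of indices the operator $A_S := \prod_{a\in S}\Pi_a'$ is itself a projector with $A_S|\psi'\rangle = |\psi'\rangle$, and Cauchy--Schwarz gives
\[
|\langle 0^n|\psi'\rangle|^2 = |\langle A_S\,0^n|\psi'\rangle|^2 \le \langle 0^n|A_S|0^n\rangle.
\]
When the $\Pi_a'$ in $S$ have pairwise disjoint supports, this factorizes as $\prod_{a\in S}(1-\delta_a)$, with $\delta_a := 1 - \langle 0^n|\Pi_a'|0^n\rangle \in [0,1]$. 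The overlap graph of the rotated stabilizers has maximum degree at most $(s')^2 - 1$, so greedy coloring partitions them into at most $(s')^2$ disjoint color classes; pigeonhole delivers a disjoint $S$ with $\sum_{a\in S}\delta_a \ge E/(s')^2$, where $E := \sum_a \delta_a$ is the total defect of $|0^n\rangle$ against $\calC'$. Hence $|\langle 0^n|\psi'\rangle|^2 \le \exp(-E/(s')^2)$.

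\textbf{Step 3: Defect lower bound (the main obstacle).} What remains is to show $E \ge \Omega(d'(s')^2)$. For Pauli stabilizer codes with Clifford $U$, the rotated code is still a stabilizer code, and one has $\langle 0^n|P_{\calC'}|0^n\rangle = 2^{-r}$ where $r$ is the rank of the $X$-part of the stabilizer matrix; a Singleton-type bound on the classical code generated by the $Z$-only elements of the centralizer forces $r \ge d' - 1$, giving the needed defect bound. The hard part is extending this beyond the Pauli case to the general commuting-projector setting of the theorem, since for non-Clifford $U$ the rotated projectors need not be Pauli. My plan is a contrapositive: if $E$ were too small, the spectral gap $\ge 1$ of the commuting-projector Hamiltonian forces $\|P_{\calC'}|0^n\rangle\|^2 \ge 1-E$, yielding a codeword $|\chi\rangle$ within overlap $1-E$ of $|0^n\rangle$; the $s'$-sparsity then lets one peel off violated projectors one at a time using corrections of support $\le s'$, aggregating to a bounded-weight operator implementing a nontrivial map on the code space, hence a logical element of weight below $d'$, contradicting the distance. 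Executing this local-correction argument abstractly, without the benefit of Pauli syndrome decoding, is the delicate point, and I expect it to proceed via a soundness or cluster-expansion property of the rotated Tanner structure. Chaining the estimates yields $E_h(\psi) \ge \alpha d$ for some $\alpha = \alpha(s,h) > 0$, with the hypothesis $d > s^4 2^{5h}$ ensuring the cascade of thresholds from the reductions is met.
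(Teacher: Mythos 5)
Your Step 1 is the same light-cone reduction the paper uses, and your Step 2 is a correct and clean observation: for disjointly supported commuting projectors $A_S=\prod_{a\in S}\Pi_a'$ one indeed gets $|\la 0^n|\psi'\ra|^2\le \la 0^n|A_S|0^n\ra=\prod_{a\in S}(1-\delta_a)\le e^{-E/(s')^2}$ after a greedy coloring. But this only reduces the theorem to the inequality $E=\sum_a\bigl(1-\la 0^n|\Pi_a'|0^n\ra\bigr)\ge \Omega\bigl(d'(s')^2\bigr)$, i.e.\ to a linear-in-$d$ lower bound on the \emph{energy} of a product state against an arbitrary commuting-projector LDPC Hamiltonian. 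That is where all of the difficulty lives, and Step 3 does not establish it. Concretely: (a) the contrapositive you sketch starts from ``$E$ too small $\Rightarrow$ $\|P_{\calC'}|0^n\ra\|^2\ge 1-E$,'' which is only informative when $E<1$; it can therefore at best rule out $E=o(1)$, not $E=o(d)$, so it cannot deliver the $\Omega(d)$ defect bound you need. (b) The ``peel off violated projectors with support-$s'$ corrections'' step is a soundness/local-testability property; general qLDPC codes are not locally testable, and even in the toric code a state violating $O(1)$ checks may require a weight-$\Theta(d)$ correction. (c) Even granting a bounded-weight operator mapping $|0^n\ra$ into the code space, that is not a logical operator (an operator preserving $\calC'$ and acting nontrivially \emph{within} it), so no contradiction with the distance follows. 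Note also that the theorem itself does not imply $E\ge\Omega(d)$ (it only forces $\|P_{\calC'}|0^n\ra\|^2\le 2^{k-\alpha d}$), so your target inequality is a genuinely separate claim whose truth for general commuting-projector codes you would have to argue from scratch.

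For comparison, the paper avoids energies entirely. It uses the disentangling lemma of Ref.~\cite{bravyi2010tradeoffs} to show that the computational-basis measurement distribution of any logical state has bounded dependency degree $K=s^2+s^4$ (Lemma~\ref{lemma:x}), deduces $\EE(|x|)\le-(K+1)\ln|\la\psi|0^n\ra|^2$ together with an exponential tail for $\Pr[|x|\ge t]$, and then runs a Bonferroni/inclusion--exclusion argument: the partial sums $S_1,\dots,S_{d-1}$ depend only on marginals on fewer than $d$ qubits, hence are identical for all logical states by local indistinguishability, and since some logical state is exactly orthogonal to $|0^n\ra$, every logical state's overlap is bounded by the exponentially small Bonferroni error term. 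Your proposal uses neither the disentangling lemma nor local indistinguishability, which are the essential inputs; as written, Step 3 is a plan rather than a proof, and the proof is incomplete.
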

	As a consequence, the overlap between any logical state and any topologically trivial state is exponentially small in $d$.
	Moreover, since the bound Eq.~(\ref{bound_LDPC}) does not depend on the number of physical qubits $n$,
	the constant depth circuit that creates a topologically trivial state may act on an arbitrary number of ancillary qubits
	initialized in $|0\ra$. 
	The bound Eq.~(\ref{bound_LDPC}) confirms the belief that high-distance codes capable of correcting more errors require more entanglement to encode
	a qubit. 
	We emphasize that this bound is extremely general as it applies to {\em all} qLDPC codes.
	The linear scaling with $d$ in Eq.~(\ref{bound_LDPC}) cannot be improved  since
	$E_h(\psi)\le n$ for any state $|\psi\ra$ and there exist qLDPC codes with the distance $d\sim n$~\cite{panteleev2022asymptotically}.
	Although the bound Eq.~(\ref{bound_LDPC}) is easy to state, its proof is rather nontrivial.
	It combines the characterization of entanglement present in
	logical states of qLDPC codes stated as the disentangling lemma in~\cite{bravyi2010tradeoffs} and the Hamming weight concentration bound for shallow peaked quantum circuits (slightly improved upon \cite{bravyi2023classical}).

	Our second result is the distance-entanglement tradeoff for stabilizer codes~\cite{gottesman1998theory}.
	The code space $\calC$ of a stabilizer code with $n$ physical qubits is defined as
	\be
	\label{stabilizer_code}
	\calC=\{  |\psi\ra \in (\CC^2)^{\otimes n} \, : \, S_a|\psi\ra =|\psi\ra~~ \forall \; a=1,\ldots,m\}
	\ee
	where $S_a\in \{\pm 1 \} \cdot  \{I,X,Y,Z\}^{\otimes n}$ are pairwise commuting Pauli operators
	and $m=n-k$. 
	We show the following. 
	\begin{theorem}
		\label{thm:stabilizer}
		For any stabilizer code $\calC$ and any logical state $|\psi\ra\in \calC$ one has
		\be
		\label{bound_stabilizer}
		E_0(\psi) \ge d-1.
		\ee
	\end{theorem}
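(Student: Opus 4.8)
The plan is to bound the overlap $|\la\psi|\phi\ra|^2$ with an arbitrary product state $|\phi\ra=\bigotimes_{i=1}^n|\phi_i\ra$ by reducing it to a single-region calculation. First I would recall the elementary fact that for any bipartition of the $n$ qubits into a region $A$ and its complement $\bar A$, and any pure state $|\psi\ra$, the largest eigenvalue of the reduced density matrix equals the maximal squared overlap of $|\psi\ra$ with states that are product across the cut $A|\bar A$. Since a product state is in particular unentangled across every cut, this gives
\be
|\la\psi|\phi\ra|^2 \le \lambda_{\max}\!\left(\rho_\psi^A\right),\qquad \rho_\psi^A=\tr_{\bar A}|\psi\ra\!\la\psi|,
\ee
for every choice of $A$. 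The freedom to choose $A$ is what I will exploit.

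Second, I would compute $\rho_\psi^A$ for $|A|\le d-1$ using the distance. Expanding in Pauli operators, $\rho_\psi^A=2^{-|A|}\sum_{P}\la\psi|P|\psi\ra\,P$, where $P$ ranges over Paulis supported on $A$, so $\mathrm{wt}(P)<d$. For such $P$ the expectation is state-independent: if $P$ anticommutes with some stabilizer then $\la\psi|P|\psi\ra=0$; if $P\in\calS$ then $\la\psi|P|\psi\ra=1$; and the remaining case $P\in N(\calS)\setminus\calS$ is excluded, since it would be a logical operator of weight below $d$. Hence $\rho_\psi^A=2^{-|A|}\sum_{S\in\calS_A}S$, where $\calS_A$ denotes the subgroup of stabilizers supported inside $A$. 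This operator is proportional to a projector, so its spectrum is flat and $\lambda_{\max}(\rho_\psi^A)=|\calS_A|\,2^{-|A|}$.

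The decisive step is then to produce a region $A$ with $|A|=d-1$ on which the reduced state is maximally mixed, i.e. with $\calS_A=\{I\}$ (no nontrivial stabilizer supported inside $A$). For such a region the two displays give $|\la\psi|\phi\ra|^2\le 2^{-(d-1)}$ for every product state $|\phi\ra$, and taking $-\log_2$ yields $E_0(\psi)\ge d-1$, uniformly in $|\psi\ra$.

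The main obstacle is precisely the existence of this region. For non-degenerate codes it is immediate: every nontrivial stabilizer has weight at least $d$, so no $(d-1)$-qubit set can enclose one, and any $A$ with $|A|=d-1$ works. The difficulty is degenerate codes, which possess stabilizers of weight below $d$; here one must exhibit a $(d-1)$-subset avoiding the supports of all such low-weight stabilizers. I would first dispose of weight-one stabilizers by deleting the corresponding frozen qubits—an operation that preserves $k$ and $d$ and can only lower the overlap—and then argue for the reduced code that a stabilizer-free set of size $d-1$ can be assembled, for instance by a maximality/greedy argument or by working inside the support of a minimum-weight logical operator. The key tension to make quantitative is that a family of low-weight stabilizers dense enough to block every $(d-1)$-subset would itself force a logical operator of weight below $d$, contradicting the distance; closing this implication is the crux of the argument.
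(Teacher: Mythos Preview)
Your reduction to finding a region $A$ of size $d-1$ carrying no nontrivial stabilizer is exactly the paper's strategy, and your computation of $\rho_\psi^A$ matches theirs. The gap is that you have not actually established the existence of such an $A$; you correctly flag this as ``the crux'' but the approaches you sketch do not close it. Working inside the support of a minimum-weight logical $L$ fails in general: a stabilizer $S$ with $\mathrm{supp}(S)\subseteq\mathrm{supp}(L)$ need not make $\mathrm{wt}(LS)<d$ unless $S$ and $L$ share the same single-qubit Pauli type on every qubit of $S$, which nothing forces. A bare maximality argument also stalls: if $A$ is maximal with $\calS_A=\{I\}$ and $|A|<d-1$, then for every $j\notin A$ some stabilizer $S_j$ is supported on $A\cup\{j\}$, but there is no evident way to combine the $S_j$ into a logical operator of weight below $d$.

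The paper fills this gap via the cleaning lemma. Given $A$ with $|A|<d$ and $\calS_A=\{I\}$, one cleans an anticommuting logical pair $\overline{X},\overline{Z}$ off of $A$; they must then anticommute locally at some qubit $i\notin A$. Any stabilizer supported on $A\cup\{i\}$ meets $\overline{X}$ and $\overline{Z}$ only at $i$ and must commute with both, forcing its $i$th tensor factor to be $I$; hence it already lies in $\calS_A=\{I\}$. This grows $A$ one qubit at a time up to size $d$ (and in particular $d-1$). Your outline becomes a complete proof once this inductive step is supplied.
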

	Equivalently, the overlap between any logical state and any $n$-qubit product state
	is at most $2^{1-d}$.
	Moreover, this bound is
	tight as it can be saturated exactly by Shor's code~\cite{shor1995scheme}; see Supplemental Material \cite{supp} for details. \nocite{janson2004large,movassagh2020constructing}
	The proof of Eq.~(\ref{bound_stabilizer}), which relies on the cleaning lemma~\cite{bravyi2009no},
	is pleasingly simple. 
	Although we were not able to prove a lower bound $\Omega(d)$ for the depth-$h$ GEM,
	a simple corollary of Eq.~(\ref{bound_stabilizer}) is 
	\be
	\label{bound_stabilizer_Clifford}
	E_h^C(\psi) \ge \frac{d}{2^h} - 1.
	\ee
	In other words, the overlap between any logical state  and any $n$-qubit state that
	can be prepared by a depth-$h$ Clifford circuit is at most $2^{1-d/2^h}$, exponentially small in $d$ for any constant depth $h$.

	While the above results depend only on $d$ (but not $k$), our third result contains $k$-dependent lower bounds, which is particularly useful for codes with constant rates.
	Let $H(x)=-x\log_2 x-(1-x)\log_2(1-x)$, and $H^{-1}(x)$ be its inverse function restricting to $x\in[0,\frac{1}{2}]$.
	\begin{theorem}\label{thm:overlapgeneralcode}
		\begin{enumerate}
			\item[(i)] For any code $\calC$ and any logical state $|\psi\ra\in \calC$ one has
			\begin{equation}\label{eq:knbound}
				E_h(\psi)\geq\Big(\frac{d}{2^h}-1\Big)H^{-1}\Big(\frac{k}{n}\Big).
			\end{equation}
			\item[(ii)] For a qLDPC code such that $d>s^42^{5h}$, 
			\begin{equation}\label{eq:knLDPC}
				E_h(\psi)\geq \beta nH^{-1}\Big(\frac{k}{n}\Big).
			\end{equation}
			where $\beta=\Theta(s^{-4}2^{-4h})>0$ is a constant that depends only on $s$ and $h$. 
		\end{enumerate}
	\end{theorem}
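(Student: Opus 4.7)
}

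\emph{Part (i).} The approach is to prove the depth-$0$ version first and lift to general $h$ via a light-cone reduction. For the reduction, fix a depth-$h$ circuit $U$ and any $T'\subseteq[n]$ with $|T'|\le\lfloor(d-1)/2^h\rfloor$. The backward light cone $B(T')$ of $T'$ under $U^\dagger$ has $|B(T')|\le 2^h|T'|\le d-1$, so $\tr_{B(T')^c}|\psi\ra\la\psi|$ is code-fixed by the distance hypothesis, and therefore so is $\tr_{T'^c}(U^\dagger|\psi\ra\la\psi|U)$. The rotated code $U^\dagger\calC$ is thus an $(n,k,d')$ code with $d'-1\ge d/2^h-1$, and the depth-$0$ bound applied to $U^\dagger\calC$ and $|0^n\ra$ yields Eq.~(\ref{eq:knbound}).

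For the depth-$0$ statement, the goal is to show $|\la\psi|\phi\ra|^2\le 2^{-(d-1)H^{-1}(k/n)}$ for any logical $|\psi\ra$ of an $(n,k,d)$ code and any product $|\phi\ra$. For any $T$ with $|T|\le d-1$, the distance property gives $|\la\psi|\phi\ra|^2\le\la\phi_T|\sigma_T|\phi_T\ra$, where $\sigma_T$ is the common $(d-1)$-qubit marginal. Rotating to the basis with $|\phi\ra=|0^n\ra$, let $p_i=\la 0|\sigma_i|0\ra$. Subadditivity of the von Neumann entropy of $\pi=P_\calC/2^k$ yields $\sum_i H(p_i)\ge S(\pi)=k$, and Jensen's inequality for the concave $H(\cdot)$ then forces the average single-qubit ``zero'' probability $\bar p=n^{-1}\sum_i p_i$ to satisfy $\bar p\le 1-H^{-1}(k/n)\le 2^{-H^{-1}(k/n)}$, where we used $1-x\le 2^{-x}$ for $x\in[0,1]$. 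The task is then to find a subset $T$ of size $d-1$ along which this average bound composes into the product $\prod_{t=1}^{d-1}2^{-H^{-1}(k/n)}=2^{-(d-1)H^{-1}(k/n)}$, using the chain rule for conditional probabilities together with the distance-enforced consistency of $(d-1)$-marginals of the code.

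The principal technical obstacle is converting the average rate bound into a multiplicative bound across $d-1$ qubits: a rank-one product projection can concentrate the residual entropy unevenly, so the rate must be sustained globally rather than greedily. The natural tools are a Shearer-type inequality $\sum_{|T|=d-1}S(\sigma_T)\ge\binom{n-1}{d-2}k$ and strong subadditivity applied to the conditional reduced states, with the multiplicative bound extracted via a non-greedy subset selection. This is the step I anticipate requiring the most care.

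\emph{Part (ii).} To promote the factor $d/2^h$ of Eq.~(\ref{eq:knbound}) to $n$ for qLDPC codes, the plan is to combine the block decomposition behind Theorem~\ref{thm:LDPC} with the rate-dependent bound of part~(i). The disentangling lemma of~\cite{bravyi2010tradeoffs} isolates $\Omega(n\,s^{-4}2^{-4h})$ disjoint blocks of $O(s^4 2^{4h})$ qubits within $|\psi\ra$, each locally inheriting the global rate $k/n$. Applying the part-(i) bound on each block contributes a multiplicative factor $2^{-\Omega(H^{-1}(k/n))}$, and composing over all blocks---justified by the Hamming weight concentration bound of~\cite{bravyi2023classical} that suppresses cross-block interference---yields Eq.~(\ref{eq:knLDPC}) with $\beta=\Theta(s^{-4}2^{-4h})$. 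The principal difficulty here is ensuring that each block's contribution scales with the global rate $k/n$ rather than some smaller effective rate, which should follow from the rate-preserving structure of the disentangling decomposition.
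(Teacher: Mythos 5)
Your part (i) starts from the right place (the light-cone reduction to depth $0$, and the subadditivity argument showing some qubit has $H(p_q)\ge k/n$, hence $p_q\le 1-H^{-1}(k/n)$), but the step you flag as "requiring the most care" is genuinely open in your write-up, and the tools you propose will not close it. Shearer gives a $(d-1)$-qubit marginal $\sigma_T$ with $S(\sigma_T)\ge (d-1)k/n$, but a state on $d-1$ qubits with that much entropy can still have largest eigenvalue as big as roughly $1-k/n$, so bounding $\la \phi_T|\sigma_T|\phi_T\ra$ by spectral data of marginals of the \emph{original} code yields only one factor of $1-H^{-1}(k/n)$, not the $(d-1)$-fold product. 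The missing idea is the paper's Lemma~\ref{lemma:postselectcode}: projecting one qubit of a distance-$d$ code onto either computational basis state produces (up to normalization) a logical state of a code with the \emph{same} $k$ and distance $\ge d-1$; this is a two-line consequence of the Knill--Laflamme conditions. With it the argument is simply greedy: write $|\psi\ra=\lambda_0|0\ra|\psi_0\ra+\lambda_1|1\ra|\psi_1\ra$ with $|\lambda_0|^2\le 1-H^{-1}(k/n)$ by Lemma~\ref{lemma:codeentropybound}, recurse on $|\psi_0\ra$ a total of $d-1$ times, obtain $|\la 0^n|\psi\ra|^2\le\prod_{i=0}^{d-2}\bigl[1-H^{-1}\bigl(\frac{k}{n-i}\bigr)\bigr]$, and finish with $-\ln(1-x)>x$. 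The conditional-probability chain you were hoping to control is exactly what this postselection lemma controls.

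Your part (ii) is not a proof and I do not see how to repair it along the lines sketched: the disentangling lemma does not partition the qubits into blocks "each locally inheriting the global rate $k/n$" (a block of $O(s^42^{4h})$ qubits carries no sub-code of that rate in any useful sense), and "composing over blocks" while "suppressing cross-block interference" is not a defined operation. The paper's argument is different and elementary given Lemma~\ref{lemma:x}: set $a=nH^{-1}(k/n)$, so $\sum_{i\le a}\binom{n}{i}<2^k$; by dimension counting there is a logical state orthogonal to every basis vector of Hamming weight $\le a$, hence with $\EE(|x|)>a$; since $\EE(|x|)=\sum_i\Pr[x_i=1]$ depends only on single-qubit marginals, local indistinguishability forces $\EE(|x|)>a$ for \emph{every} logical state; and Lemma~\ref{lemma:x}(ii) (the independent-set bound $-\ln|\la\psi|0^n\ra|^2\ge\EE(|x|)/(K+1)$ with $K=O(s^42^{4h})$ after the light-cone rotation) then gives $E_h(\psi)\ge\beta nH^{-1}(k/n)$ with $\beta=\Theta(s^{-4}2^{-4h})$. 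No weight-concentration tail bound is needed for this part.
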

	In particular, if the code has constant rate $k/n=\Theta(1)$, then for general codes, we have
	\begin{equation}\label{eq:knboundasmp}
		E_h(\psi)=\Omega\Big(\frac{d}{2^h}\Big);
	\end{equation}
	for qLDPC code such that $d>s^42^{5h}$, we have
	\begin{equation}
		E_h(\psi)=\Omega\Big(\frac{n}{2^{4h}}\Big).
	\end{equation}
	
	Equation (\ref{eq:knbound}) is proved by combining an entropic argument inspired by \cite{bravyi2010tradeoffs,anshu2020circuit} and a general observation about making measurements on a code.
	Equation (\ref{eq:knLDPC}) relies on similar ideas toward Eq.~(\ref{bound_LDPC}).

	Finally, one might expect that a similar distance-entanglement tradeoff 
	holds for all quantum codes. Surprisingly, we show that this is not the case.
	To this end, we consider permutation invariant codes~\cite{pollatsek2004permutationally,ouyang2014permutation}.
	Logical states of such codes are invariant under any permutation of physical qubits. 
	We construct a family of distance-$2$ permutation invariant codes $\calC_n$ encoding $k=1$ logical qubit into $n$ physical qubits
	and a family of logical states $|\psi_n\ra\in \calC_n$ with an asymptotically vanishing GEM,
	\be
	\label{PI_code}
	E_0(\psi_n) =O(\frac{1}{n}).
	\ee
	Hence the overlap between $|\psi_n\ra$ and some $n$-qubit product state approaches $1$ in the limit $n\to \infty$.
	By concatenating the code $\calC_n$ with itself several times we obtain a family of codes with arbitrarily large $d$ and $k$ such that a logical state obeys $E_0(\psi_n)\to 0$ as $n\to\infty$.
	This example shows that some logical states 
	of high-distance codes may have very little entanglement.

	\noindent
	{\em Related works.} 
	Circuit lower bounds for low-energy states of quantum code Hamiltonians have been studied before by many authors~\cite{aharonov2018quantum,freedman2013quantum,eldar2017local,anshu2023nlts,anshu2020circuit}.
	Upper bounds on the overlap between logical states and low-depth states
	similar to ours were reported by Anshu and Nirkhe~\cite{anshu2020circuit}
	who showed $E_h(\psi)=\Omega(\frac{k^2d}{n^2 s2^{4h}\ln^4{(ds)}})$ for quantum LDPC codes (Lemma 13 therein).
	Compared with Eq.~(\ref{bound_LDPC}), this bound depends on the code length $n$ and is sublinear in $d$ (even up to logarithm) unless the code has constant rate $k/n$, 
	in which case the LDPC condition is actually not necessary; see Eq.~(\ref{eq:knboundasmp}).
	Reference~\cite{anshu2020circuit} also showed $E_h(\psi)=\Omega(d^2/n2^{2h})$ for any code (Lemma 14 therein).
	In particular, $E_h(\psi)=\Omega(d)$ for codes with linear distance and $h=O(1)$, which complements our results.
	Reference~\cite{botero2007scaling} used GEM to probe quantum phase transitions in several 1D spin chain models; see also~\cite{orus2008universal,orus2008geometric,wei2010entanglement}.
	Reference~\cite{orus2014geometric} studied GEM for some examples of topological quantum codes.
	In an accompanying work \cite{li2024much}, we study the GEM $E_h$ for many-body systems that support emergent anyons and fermions.

	\section{proofs}
	We proceed to the proof of the distance-entanglement tradeoffs stated above.

	\subsection{Stabilizer codes}
	We begin with Theorem \ref{thm:stabilizer} since its proof is the simplest one.
	Suppose $\calC$ is a stabilizer code with distance $d$ and $|\psi\ra\in \calC$ is a logical state.
	By definition, $2^{-E_0(\psi)} =\la \psi|\rho|\psi\ra$ for some 
	product state $\rho=\rho_1\otimes \cdots \otimes \rho_n$.
	Partition $n$ qubits into two disjoint registers $A,B$ 
	and let $\rho_A=\mathrm{Tr}_B \rho$.
	Since $\rho$ is a product state, one has $\rho\le \rho_A \otimes I_B$.
	Accordingly,
	\be
	\label{stabilizer_eq1}
	2^{-E_0(\psi)} = \la \psi|\rho|\psi\ra\le \la \psi|\rho_A\otimes I_B|\psi\ra
	=\mathrm{Tr}(\rho_A \eta_A),
	\ee
	where $\eta=|\psi\ra\la \psi|$ and  $\eta_A=\mathrm{Tr}_B \eta$.
	Suppose $|A|<d$. The local indistinguishability property implies that the reduced
	density matrix $\eta_A$
	is the same for all logical states. Thus one can compute $\eta_A$ by pretending that $\eta$
	is the maximally mixed logical state $\eta=(1/2^k)\Pi_\calC$, where $\Pi_\calC$ is the 
	projector onto $\calC$. It is well known~\cite{nielsen2010quantum} that 
	\begin{equation}
		\Pi_\calC = \frac1{2^{n-k}} \sum_{S\in \calS} S,
	\end{equation}
	where $\calS=\la S_1,\ldots,S_m\ra$ is the stabilizer group of $\calC$.
	Taking the partial trace over $B$ gives
	\be
	\label{stabilizer_eq2}
	\eta_A = \frac1{2^k} \mathrm{Tr}_B \Pi_\calC =\frac1{2^{|A|}} \sum_{S\in \calS(A)} S, 
	\ee
	where $\calS(A)$ is the subgroup of $\calS$ consisting of all stabilizers whose support is a subset of $A$.
	We shall need the following lemma.
	\begin{lemma}
		\label{lemma:1}
		For any distance-$d$ stabilizer code, there exists a subset of qubits $A$ such that $|A|\geq d$ and the only stabilizer supported on $A$ is the identity operator.
	\end{lemma}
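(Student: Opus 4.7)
The plan is to construct $A$ greedily, starting from $A_0 = \emptyset$ and adding qubits one at a time while maintaining the invariant $\calS(A) = \{I\}$. The key claim is that this process succeeds for at least $d$ iterations. The cleaning lemma of Bravyi~\cite{bravyi2009no}, which asserts that $\calN(\calS) \cap \text{Pauli}(B) = \calS(B)$ whenever $|B| \leq d-1$, will be the main tool.

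Suppose for contradiction the greedy process halts at some $A$ with $|A| = r < d$. Then for every qubit $q \in B := [n]\setminus A$, the subgroup $\calS(A\cup\{q\})$ contains a nontrivial stabilizer $S_q$, which must act nontrivially on $q$ since $\calS(A) = \{I\}$. Because $r < d$, cleaning also gives $\calN(\calS)\cap \text{Pauli}(A) = \{I\}$, so every logical operator is representable on $B$. In particular, take a minimum-weight logical $L$ cleaned to have $\text{supp}(L) \subseteq B$, with $|L| \geq d$.

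The goal is to contradict the distance bound by exhibiting a logical operator of weight less than $d$. I would examine the product $L \cdot \prod_{q\in Q} S_q$ where $Q = \text{supp}(L) \subseteq B$: its restriction to $A$ has weight at most $r$ (bounded by $|A|$), while its restriction to $Q$ has weight equal to the number of qubits on which $(S_q)_q \neq L_q$. Everywhere else in $B$ the product acts trivially. The main obstacle is to arrange simultaneous cancellation $(S_q)_q = L_q$ for every $q \in Q$. This should be possible by a dimension argument on the projections $\pi_q : \calS(A\cup\{q\}) \to \text{Pauli}(\{q\})$, which are injective with images of dimension $1$ or $2$, combined with a judicious choice of logical coset so that any forced single-direction Pauli $P_q^*$ agrees with the corresponding $L_q$. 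Granting this cancellation, the resulting product is a nontrivial logical of weight at most $r < d$, contradicting the definition of the distance and hence forcing the greedy process to reach $|A| \geq d$.
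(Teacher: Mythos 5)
Your skeleton (greedy growth of $A$ certified by the cleaning lemma) is the same as the paper's, which runs the argument as a direct induction rather than by contradiction. But the step you flag as the ``main obstacle''---arranging $(S_q)_q=L_q$ for all $q\in Q$---is left unproved, and the mechanism you propose for it (a dimension count on the projections $\pi_q$ plus a ``judicious choice of logical coset'') is not the right tool: it is unclear how changing the coset representative, which alters both the support of $L$ and its local Paulis simultaneously, could be steered to match a prescribed family $\{P_q^*\}$. As written, the proof is incomplete at exactly the point where all the content lies.

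The good news is that the gap closes in one line, and the closing observation is precisely the one the paper uses. Since $L$ is supported on $B$ and $S_q$ on $A\cup\{q\}$, the two operators overlap only at $q$; as they must commute (logical vs.\ stabilizer), their single-qubit restrictions $L_q$ and $(S_q)_q$ commute. Both are nontrivial at $q\in Q$ (the latter because $\calS(A)=\{I\}$ forces any nontrivial $S_q\in\calS(A\cup\{q\})$ to act on $q$), and two nontrivial single-qubit Paulis commute only if they are equal up to phase. So the cancellation is automatic---no choice of coset is needed---and your product indeed lands in $\mathrm{Pauli}(A)$, giving a nontrivial logical of weight at most $r<d$ and the desired contradiction. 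The paper packages the same commutation-forcing idea differently: it cleans an anticommuting pair $\overline{X},\overline{Z}$ onto $B$, picks a qubit $i$ where they locally anticommute, and notes that any stabilizer supported on $A\cup\{i\}$ would have to commute locally with both, hence act trivially on $i$; this exhibits directly a qubit that can be added, avoiding the detour through a weight bound. Either route works once the commutation step is made explicit; yours additionally needs the (easy) remark that $L'\notin\calS$ because $L\notin\calS$.
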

	Applying Eqs.~(\ref{stabilizer_eq1}) and (\ref{stabilizer_eq2}) to a subset of qubits $A$
	such that $\calS(A)=\{I\}$ and $|A|=d-1$ gives
	$\eta_A=2^{1-d}I$ and
	\be
	2^{-E_0(\psi)}\le 2^{1-d} \mathrm{Tr}(\rho_A) = 2^{1-d},
	\ee
	which proves the distance-entanglement tradeoff (\ref{bound_stabilizer}).
	
	Furthermore, suppose $U$ is a depth-$h$ Clifford circuit, then $U\calC$ is also a stabilizer code and the standard light cone argument shows that its distance is at least
	$d/2^h$.
	Applying Eq.~(\ref{bound_stabilizer}) to the stabilizer code $U\cdot \calC$
	proves Eq.~(\ref{bound_stabilizer_Clifford}).

	\begin{proof}[\bf Proof of Lemma~\ref{lemma:1}]
		We shall use induction in $\ell=|A|$.
		The base of induction is $\ell=0$ in which case $A=\emptyset$ and $\calS(A)=\{I\}$ by definition. 
		Consider the induction step. Let $A$ be a subset of $\ell$ qubits such that $\calS(A)=\{I\}$.
		If $\ell=d$ we are done. Otherwise
		$|A|=\ell<d$.
		The cleaning lemma~\cite{bravyi2009no} asserts that for any logical Pauli operator $\overline{P}$ 
		and any subset of qubits $A$ with $|A|<d$
		there exists an equivalent logical Pauli operator $\overline{Q} \in \overline{P}\cdot \calS$
		such that $\overline{Q}$ acts trivially on $A$.
		Thus one can choose an anticommuting pair of logical Pauli operators $\overline{X}$ and $\overline{Z}$ that both act trivially on $A$. 
		Then there must exist a qubit $i\notin A$ such that $\overline{X}$ and $\overline{Z}$ locally anticommute on $i$.
		Assume without loss of generality (WLOG) that $\overline{X}$ and $\overline{Z}$ act on the $i$th qubit as Pauli $X$ and $Z$
		respectively
		(otherwise, perform a local Clifford change of basis on the $i$th qubit). 
		Consider a set $A'=A\cup \{i\}$ of size $\ell+1$. We claim that $\calS(A')=\{I\}$.
		Indeed, let  $S\in \calS$ be a stabilizer supported on $A'$. 
		Let $S_i\in \{I,X,Y,Z\}$ be the restriction of $S$ onto the $i$th qubit.
		Since $S$ commutes with all logical operators and $S$ overlaps with 
		$\overline{X}$ and $\overline{Z}$ {\em only} on the $i$th qubit, one infers that
		$S_i$ commutes with both Pauli $X$ and $Z$. This is only possible if $S_i=I$.
		Thus $S\in \calS(A)=\{I\}$, that is, $S=I$ and $\calS(A')=\{I\}$.
		This completes the induction step.
	\end{proof}
	
	\subsection{Quantum LDPC codes}
	Let us prove Theorem \ref{thm:LDPC}.
	It suffices to consider an arbitrary logical state $|\psi\ra\in \calC$ and prove that
	\be
	\label{LDPC_eq1}
	-\log_2{|\la \psi|0^n\ra|^2} \ge d\cdot g(s)
	\ee
	for some function $g(s)>0$.
	Indeed, since a local change of basis on each qubit does not affect the distance and sparsity of $\calC$, we can assume WLOG that $E_0(\psi)$ is achieved by $-\log_2{|\la \psi|0^n\ra|^2}$.
	Moreover, suppose $U$ is a depth-$h$ circuit, then $U\cdot \calC$ is a qLDPC code defined by commuting projectors $U\Pi_a U^\dag$. 
	The standard light cone argument shows that $U\cdot \calC$ has sparsity at most $s 2^h$ and distance at least $d /2^{h}$.
	Hence, $E_0(\psi)\ge d\cdot g(s)$ implies $E_h(\psi)\ge d\cdot g(s2^h)/2^h$. 
	
	The key is to consider the probability distribution of bit strings corresponding to measuring $\ket{\psi}$ in the computational basis:
	\begin{equation}
		\mathrm{Pr_\psi}(x)=|\braket{x|\psi}|^2,~~x\in\{0,1\}^n.
	\end{equation}
	The following lemma characterizes such probability distribution.
	\begin{lemma}
		\label{lemma:x}
		Suppose $d>s^4$. Then:
		\begin{enumerate}
			\item[(i)] each bit of $x$ is independent of all but at most $K$ other bits;
			\item[(ii)] $\EE_\psi(|x|)\leq -(K+1)\ln{|\la \psi|0^n\ra|^2}$;
			\item[(iii)] there exists a function $c(s)$ such that $ \mathrm{Pr_\psi}[|x|\geq t]\leq \exp{\left[ \EE_\psi(|x|)-t\right]}$ for any $t\ge c(s) \EE_\psi(|x|)$.
		\end{enumerate}
	\end{lemma}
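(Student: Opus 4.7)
The plan is to establish the three parts in order. Part (i) rests on the disentangling lemma of~\cite{bravyi2010tradeoffs}, which is what supplies the combinatorial $K$-local independence structure on the measurement distribution $\Pr_\psi$. Once (i) is in hand, part (ii) reduces to a short graph-coloring argument, and part (iii) becomes a Chernoff-style moment-generating-function estimate adapted to the locally dependent variables coming from (i).

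For (i), I would invoke the disentangling lemma to show that the reduced state of $|\psi\ra$ on any small enough region factorizes in a way dictated by the local stabilizer data. Concretely, for each qubit $i$ one should be able to identify a neighborhood $N_i$ in the Tanner graph, of size $|N_i|\le K$, such that the marginal of $\Pr_\psi$ on $\{i\}$ and on the complement of $N_i$ is a product. Because each qubit lies in at most $s$ checks and each check touches at most $s$ qubits, one naturally gets $K = O(s^2)$, and the hypothesis $d > s^4$ is what allows the cleaning-type arguments inside the disentangling lemma to apply uniformly to every such neighborhood.

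For (ii), I would consider the dependency graph on the $n$ qubits whose edges record non-independence. By (i) it has maximum degree at most $K$, hence by greedy coloring splits into $K+1$ independent sets $C_1,\ldots,C_{K+1}$. Within each $C_c$ the bits $\{x_i : i\in C_c\}$ are mutually independent, so writing $p_i = \Pr_\psi[x_i = 1]$ one gets
\begin{equation}
|\la \psi|0^n\ra|^2 \;\le\; \Pr_\psi[x_i=0\ \forall i\in C_c] \;=\; \prod_{i\in C_c}(1-p_i) \;\le\; \exp\Bigl(-\sum_{i\in C_c} p_i\Bigr).
\end{equation}
Taking logarithms and summing over $c$ yields $(K+1)\bigl(-\ln|\la\psi|0^n\ra|^2\bigr) \ge \sum_i p_i = \EE_\psi(|x|)$, which is (ii).

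For (iii), I would bound the moment generating function $\EE_\psi[e^{\lambda|x|}]$ using the same color decomposition. Inside each independent class one gets the standard Chernoff estimate $\EE[\prod_{i\in C_c} e^{\lambda x_i}] \le \exp\bigl((e^\lambda-1)\sum_{i\in C_c} p_i\bigr)$, and a H\"older application across the $K+1$ classes converts this into a bound of the form $\EE_\psi[e^{\lambda|x|}] \le \exp\bigl(f(K,\lambda)\,\EE_\psi(|x|)\bigr)$ with an explicit $f$. A Markov inequality on $e^{\lambda|x|}$ and an optimization in $\lambda$ then yield the claimed tail bound, with $c(s)$ determined by how large $t/\EE_\psi(|x|)$ must be before the exponent beats $\EE_\psi(|x|) - t$; an $s$-dependent $c(s)$ suffices because $K=K(s)$. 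The main obstacle I anticipate is step (i): extracting genuine probabilistic independence (rather than just equality of marginals) from the disentangling lemma and pinning down the precise $K$ and distance threshold that makes the hypothesis $d > s^4$ sharp. Once (i) is secured, (ii) and (iii) are essentially classical probability, and the concentration step could alternatively be sharpened via the peaked-circuit machinery of~\cite{bravyi2023classical} as the authors note.
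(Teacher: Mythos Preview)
Your approach matches the paper's almost exactly, with two small discrepancies worth flagging. For (i), the correct constant is $K=s^2+s^4$, not $O(s^2)$: the disentangling lemma of~\cite{bravyi2010tradeoffs} applied to the first interaction neighborhood $M$ of a qubit (with $|M|\le s^2$) requires tracing out the boundary $\partial M$ (qubits touched by a check overlapping both $M$ and $M^c$), and it is $|M\cup\partial M|-1$ that bounds the number of possibly correlated bits; since $|M^c\cap\partial M|\le s^4$, the hypothesis $d>s^4$ is exactly what guarantees the lemma's two size conditions $|M|<d$ and $|M^c\cap\partial M|<d$. For (ii) the paper uses a single weighted independent set $V_I$ capturing a $\tfrac{1}{K+1}$ fraction of $\sum_i p_i$ (obtained greedily) rather than a full $(K+1)$-coloring, but your coloring argument is an equivalent route to the same inequality. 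For (iii) the paper simply invokes Janson's large-deviation inequality for sums of partly dependent variables~\cite{janson2004large} as a black box, rather than running your H\"older--MGF--Markov argument by hand; either way one arrives at a Bennett-type exponent $\tfrac{1}{K+1}\phi(t/\mu)$ with $\phi(x)=(1+x)\ln(1+x)-x$, whose superlinearity in $t$ delivers the claimed tail once $t\ge c(s)\,\EE_\psi(|x|)$ with $c(s)$ of order $\exp(\Theta(K))$.
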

	\noindent
	Here, $K=s^2+s^4$; $|x|=\sum_{i=1}^n x_i$ is the Hamming weight of $x$; $\ln$ stands for the natural logarithm.
	In the following, we omit the subscripts in $\mathrm{Pr_\psi}$ and $\EE_\psi$.
	
	Item (i) follows from the disentangling lemma \cite{bravyi2010tradeoffs}, 
	which implies that each qubit is disentangled from the rest after tracing out a finite number of qubits.
	Items (ii) and (iii) are consequences of (i). They state that if the probability distribution is ``peaked" at $0^n$, i.e., $|\braket{\psi|0^n}|^2$ is large, then most of the probability is supported in a small Hamming ball around $0^n$, with an exponential tail.
	Full proofs can be found in Supplemental Material \cite{supp}.

	Now we derive Eq.~(\ref{LDPC_eq1}) from Lemma~\ref{lemma:x}.
	We will apply the inclusion-exclusion principle to evaluate $|\la \psi|0^n\ra|^2$ from marginal distributions.
	We denote $R = -(K+1)\ln{|\la \psi|0^n\ra|^2}$.
	If $d<c(s)R$, or equivalently
	\be
	\label{LDPC_eq1_part1}
	-\ln{|\la \psi|0^n\ra|^2} > \frac{d}{(K+1)c(s)}, 
	\ee
	then we already have the desired form (\ref{LDPC_eq1}).
	
	Now assume $d\ge c(s)R$.
	Define
	\begin{equation}
		S_i=\sum_{1\leq  p_1<\cdots <p_i\leq n}\Pr[x_{p_1}=\cdots=x_{p_i}=1],
	\end{equation}
	for each integer $i\in [1,n]$.
	Then we have (recall the derivation of Bonferroni's inequality)
		\begin{align}\label{LDPC_eq2}
			&(-1)^{d-1}\left[ \Pr[|x|>0] - \sum_{i=1}^{d-1}(-1)^{i-1}S_i \right] \nonumber\\
			=&\sum_{t=d}^{n}\binom{t-1}{d-1}\Pr[|x|=t] 
			= \sum_{t=d}^{n}\binom{t-2}{d-2}\Pr[|x|\geq t]\nonumber\\
			\leq&\sum_{t=d}^{\infty}\binom{t-2}{d-2}e^{R-t}
			= e^{-1+R} (e-1)^{1-d}. 
		\end{align}
	Here the inequality uses Lemma~\ref{lemma:x} (ii) and (iii).
	The  condition $t\ge c(s) \EE(|x|)$ in (iii) is satisfied since $d\ge c(s)R$.
	The last equality in Eq.~(\ref{LDPC_eq2}) follows from binomial expansion (with a negative exponent); $e\equiv \exp{(1)}$. 
	
	Crucially, $S_1,\ldots,S_{d-1}$ depend solely on reduced density matrices
	of $|\psi\ra$ on subsets of less than $d$ qubits.
	Consequently, these quantities are identical for any logical states $|\psi\ra$ due to the local indistinguishability.
	Equation~(\ref{LDPC_eq2}) hence shows the overlap $|\la \psi|0^n\ra|^2 = 1-\Pr[|x|>0]$ is approximately the same for any logical states, up to the correction $e^{-1+R} (e-1)^{1-d}$. 
	On the other hand, since $\dim\calC\geq 2$, there always exists a logical state $|\psi'\ra\in \calC$ such that $\la \psi'|0^n\ra=0$.
	Therefore 
	\be
	\label{LDPC_eq4}
	|\la \psi|0^n\ra|^2\le e^{-1+R} (e-1)^{1-d}
	\ee
	for any logical state $\ket{\psi}\in\calC$.
	Equivalently,
	\be
	\label{LDPC_eq1_part2}
	-\ln{|\la \psi|0^n\ra|^2} \geq \frac{1+(d-1)\ln{(e-1)}}{K+2},
	\ee
	which is in the desired form (\ref{LDPC_eq1}).
	Since either Eq.~(\ref{LDPC_eq1_part1})
	or Eq.~(\ref{LDPC_eq1_part2}) must hold,
	we have proved Eq.~(\ref{LDPC_eq1})  with
	$
	g(s)=\frac1{\ln{2}} \min{\left[ \frac1{(K+1)c(s)}, \frac{\ln(e-1)}{K+2} \right]}.
	$	
	
	\subsection{Constant rate codes}
	
	Now let us prove Theorem \ref{thm:overlapgeneralcode}.
	Let us again consider $|\braket{0^n|\psi}|^2$ without loss of generality.
	
	To prove item (i), we will inductively measure $(d-1)$ qubits, one at a time. 
	The following lemma guarantees the existence of a qubit that is ``mixed enough."
	\begin{lemma}\label{lemma:codeentropybound}
		For any code state, there exists a qubit $q$ that the von Neumann entropy satisfies $S_{q}\geq k/n$. 
	\end{lemma}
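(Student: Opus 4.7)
The plan is to route the argument through the maximally mixed code state $\rho_\calC \defeq 2^{-k}\Pi_\calC$ rather than work directly with a specific pure logical state. Because $\rho_\calC$ is maximally mixed on a subspace of dimension $2^k$, its von Neumann entropy is exactly $k$, and this single global fact is all that the proof really needs.

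First I would apply subadditivity of the von Neumann entropy to the decomposition into the $n$ physical qubits,
\begin{equation}
k \;=\; S(\rho_\calC) \;\le\; \sum_{q=1}^{n} S(\rho_{\calC,q}),
\end{equation}
where $\rho_{\calC,q}$ is the single-qubit marginal of $\rho_\calC$ on qubit $q$. A pigeonhole argument then selects some qubit $q^\star$ with $S(\rho_{\calC,q^\star}) \ge k/n$. The second step is to transfer this bound from $\rho_\calC$ to an arbitrary pure logical state $\ket{\psi}\in\calC$ using local indistinguishability: whenever $d\ge 2$ --- the only regime in which Theorem~\ref{thm:overlapgeneralcode}(i) is non-vacuous, since $(d/2^h-1)H^{-1}(k/n)\le 0$ otherwise --- every normalized vector in $\calC$ has the same single-qubit reduced density matrix on qubit $q^\star$, and this common marginal must equal $\rho_{\calC,q^\star}$. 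Hence $S_{q^\star}(\psi) \ge k/n$, as claimed.

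I do not anticipate any real obstacle. The only subtle point is conceptual: the pigeonhole selects a qubit $q^\star$ intrinsic to the code $\calC$ rather than to a particular state, which is precisely what makes the lift through local indistinguishability legitimate. Everything else is a standard entropy inequality. In the downstream application of the lemma inside Theorem~\ref{thm:overlapgeneralcode}(i), one iterates the construction after conditioning on a measurement outcome, but the lemma in isolation is just this one-shot pigeonhole on single-qubit entropies of $\rho_\calC$.
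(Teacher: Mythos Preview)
Your proposal is correct and essentially identical to the paper's own proof: both pass to the maximally mixed code state $\rho_\calC$, use $S(\rho_\calC)=k$ together with subadditivity to get $\sum_q S_q\ge k$, pigeonhole a qubit with $S_q\ge k/n$, and invoke local indistinguishability (implicitly assuming $d\ge2$) to identify the single-qubit marginal of $\rho_\calC$ with that of any pure logical state.
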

	\noindent
	Assuming it is the first qubit WLOG, Lemma \ref{lemma:codeentropybound} guarantees that we can expand $\ket{\psi}$ as
	\begin{equation}\label{eq:decomp2}
		\ket{\psi}=\lambda_0\ket{0}\ket{\psi_0}+\lambda_1\ket{1}\ket{\psi_1}
	\end{equation}
	such that $\lambda_0, \lambda_1\neq 0$. 
	Moreover, it also follows that $H(|\lambda_0|^2)\geq S_{1}\geq k/n$, hence $|\lambda_0|^2\leq 1-H^{-1}(k/n)$. 
	
	The state $\ket{\psi_0}$ is a postmeasurement state after measuring the first qubit in the computational basis.
	As a general principle, it must be a logical state of a (different) code:
	\begin{lemma}\label{lemma:postselectcode}
		Measuring one qubit in a distance $d$ code, the postmeasurement state (regardless of the measurement outcome) is a logical state of a code with $k'=k$ and $d'\geq d-1$.
	\end{lemma}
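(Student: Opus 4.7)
The plan is to realize the post-measurement code as the image of the partial-projection map $P_b:\calC\to(\CC^2)^{\otimes(n-1)}$ defined by $P_b(\ket{\phi})=\bra{b}_1\ket{\phi}$, where $b\in\{0,1\}$ is the measurement outcome on qubit $1$, and to verify $k'=k$ and $d'\geq d-1$ as two separate claims about $\calC'_b\defeq P_b(\calC)$.

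For $k'=k$, I would invoke local indistinguishability of $\calC$ on the single qubit $1$, valid whenever $d\geq 2$ (the regime relevant to the induction in the proof of Theorem~\ref{thm:overlapgeneralcode}), to conclude that every normalized logical state $\ket{\phi}\in\calC$ shares the same reduced density matrix $\rho_1$, hence the same measurement probability
\be
p_b=\bra{\phi}(\ket{b}\bra{b}_1\otimes I)\ket{\phi}=\Tr(\ket{b}\bra{b}\rho_1)=\|P_b(\ket{\phi})\|^2.
\ee
Since the lemma is applied on a branch with $\lambda_b\neq 0$ in Eq.~(\ref{eq:decomp2}), we have $p_b>0$, so $P_b$ has trivial kernel and $\dim\calC'_b=2^k$. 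A short polarization argument then upgrades this to show that $P_b/\sqrt{p_b}$ is an isometry from $\calC$ onto $\calC'_b$, so every normalized logical state of $\calC'_b$ genuinely arises as a post-measurement state.

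For $d'\geq d-1$, I would verify local indistinguishability of $\calC'_b$ on every subset $A\subseteq\{2,\ldots,n\}$ with $|A|\leq d-2$. A direct manipulation of partial traces gives
\be
\tau_A=\frac{1}{p_b}\,\bra{b}_1\,\sigma_{A\cup\{1\}}\,\ket{b}_1,
\ee
where $\sigma_{A\cup\{1\}}$ is the reduced density matrix on $A\cup\{1\}$ of the preimage $\ket{\phi}\in\calC$. Because $|A\cup\{1\}|\leq d-1<d$, local indistinguishability of $\calC$ forces $\sigma_{A\cup\{1\}}$, and hence $\tau_A$, to be independent of the chosen logical state. This is exactly the distance-$(d-1)$ property for $\calC'_b$.

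The step requiring the most care is confirming that $P_b$ is a bijection onto its image (equivalently, that $P_b/\sqrt{p_b}$ is an isometry), since this is what identifies the post-measurement state unambiguously with a logical state of a $2^k$-dimensional code. This is the only place where local indistinguishability on a single qubit enters; the distance bound itself is then essentially a one-line reduction to local indistinguishability of $\calC$ on subsets of size at most $d-1$.
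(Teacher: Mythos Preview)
Your proposal is correct and is essentially the same argument as the paper's, just phrased in the language of local indistinguishability of reduced density matrices rather than the Knill--Laflamme condition. The paper observes that $P^\dagger P$ and $P^\dagger\mathcal{O}P$ are low-weight operators and applies KL directly to an orthonormal basis of $\calC$; your ``same $p_b$ for all logical states'' plus polarization is exactly the $\la\psi_i|P^\dagger P|\psi_j\ra=c\,\delta_{ij}$ step, and your reduced-density-matrix identity $\tau_A=\tfrac{1}{p_b}\bra{b}_1\sigma_{A\cup\{1\}}\ket{b}_1$ is the local-indistinguishability rephrasing of $\la\psi_i|P^\dagger\mathcal{O}P|\psi_j\ra=c'(\mathcal{O})\,\delta_{ij}$.
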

	
	Both lemma \ref{lemma:codeentropybound} and \ref{lemma:postselectcode} are direct consequences of the local indistinguishability of quantum codes, and are proved in Supplemental Material \cite{supp}. 
	With them in mind, let us repeat the procedure at least $(d-1)$ times until the code distance might decrease to 1. Each time we can upper bound the corresponding amplitude using Lemma \ref{lemma:codeentropybound}.
	The overlap is then bounded by the product
	\begin{equation}\label{eq:overlapboundgeneral}
		|\langle 0^n | \psi \rangle |^2  \leq  \prod_{i=0}^{d-2} \Big[1-H^{-1}\Big(\frac{k}{n-i}\Big)\Big].
	\end{equation}
	Equation (\ref{eq:knbound}) then follows from $-\ln(1-x)>x$ and that a depth-$h$ circuit at most decreases the code distance by a factor $2^h$.
	
	For item (ii), we use Lemma \ref{lemma:x}. 
	Because of the linearly of the expectation value and the local indistinguishability of quantum codes,
	$\EE_\psi(|x|)$ is the same for $\forall\ket{\psi}\in \calC$.
	In particular, if choosing $a$ so that $H(\frac{a}{n})=\frac{k}{n}$,
	then $\dim\calC$ is large enough so that there always exists $\ket{\psi}\in\calC$ such that $\braket{\psi|x}=0$ for all $x$ whose Hamming weight $|x|\leq a$:
	\begin{equation}\label{eq:nofcondition}
		\sum_{i=0}^{a}\binom{n}{i}< 2^{nH(\frac{a}{n})}\leq 2^k.
	\end{equation}
	For such $\ket{\psi}$, it is clear that $\EE_\psi(|x|)> a$.
	If follows from Lemma \ref{lemma:x}(ii) that 
	\begin{equation}
		-\ln{|\la \psi|0^n\ra|^2}>\frac{a}{K+1}=\frac{n}{K+1}H^{-1}\Big(\frac{k}{n}\Big).
	\end{equation}

	\subsection{Permutation Invariant codes}
	
	Consider a code $\calC$ with $k=1$ logical qubit and $n\ge 4$ physical qubits such that 
	the logical states encoding $|0\ra$ and $|1\ra$ are 
	\be
	\label{logical_basis0}
	|\psi_0\ra = \sqrt{1-\frac{2}{n}} |0^n\ra + \sqrt{\frac{2}{n}} |1^n\ra 
	\ee
	and
	\be
	\label{logical_basis1}
	|\psi_1\ra = \sqrt{\frac{2}{n(n-1)}} \sum_{x\in \{0,1\}^n\, : \, |x|=2} \; \; |x\ra
	\ee
	respectively. 
	In other words, $\calC$ is the two-dimensional subspace spanned by $|\psi_0\ra$ and $|\psi_1\ra$.
	The code $\calC$ is an example of a permutation invariant code~\cite{pollatsek2004permutationally,ouyang2014permutation}
	since the logical states $|\psi_i\ra$ are invariant under any permutation of $n$ qubits. 
	A simple calculation reveals that $\calC$ has distance $d=2$,  see Supplemental Material~\cite{supp}. 
	By definition of GEM,
	\be\label{eq:PIcodeGEM}
	2^{-E_0(\psi_0)} \ge |\la \psi_0|0^n\ra|^2  = 1-\frac2n.
	\ee
	Hence $E_0(\psi_0)\le -\log_2[1-(2/n)]\le 4/n$.
    We note that it also implies vanishing Renyi-$\alpha$ entanglement entropy for any bipartite cut for any $\alpha>1$.
	By concatenating the code $\calC$ with itself several times we obtain a family of 
	distance-$d$ codes
	with an arbitrarily large $d$ and $k$ such that their logical states
	have GEM approaching zero in the limit $n\to \infty$, see Supplemental Material \cite{supp} for details.

	\section{Conclusions}
	We have established distance-entanglement tradeoffs for three broad families of quantum codes: LDPC codes
	with commuting check operators, stabilizer codes, and constant rate codes.
	Logical states of such codes are shown to be highly entangled
	such that the geometric entanglement measure grows at least linearly with the code distance.
	This highlights the role of entanglement as a resource enabling quantum error correction based on
	the above code families.
	At the same time, we show that there exist families of high-distance codes such that some logical states may have very little entanglement.

	\begin{acknowledgments} 
		S.B. thanks Chinmay Nirkhe for helpful discussions. 
		Research at Perimeter Institute is supported in part by the Government of Canada through the Department of Innovation, Science and Economic Development and by the Province of Ontario through the Ministry of Colleges and Universities.
	\end{acknowledgments}
	\bibliographystyle{unsrt}
	\bibliography{QECbib}

\begin{thebibliography}{10}

\bibitem{shor1995scheme}
Peter~W Shor.
\newblock Scheme for reducing decoherence in quantum computer memory.
\newblock {\em Physical review A}, 52(4):R2493, 1995.

\bibitem{knill1997theory}
Emanuel Knill and Raymond Laflamme.
\newblock Theory of quantum error-correcting codes.
\newblock {\em Physical Review A}, 55(2):900, 1997.

\bibitem{bennett1996mixed}
Charles~H Bennett, David~P DiVincenzo, John~A Smolin, and William~K Wootters.
\newblock Mixed-state entanglement and quantum error correction.
\newblock {\em Physical Review A}, 54(5):3824, 1996.

\bibitem{wen1990ground}
Xiao-Gang Wen and Qian Niu.
\newblock Ground-state degeneracy of the fractional quantum {H}all states in the presence of a random potential and on high-genus {R}iemann surfaces.
\newblock {\em Physical Review B}, 41(13):9377, 1990.

\bibitem{kitaev2003fault}
A~Yu Kitaev.
\newblock Fault-tolerant quantum computation by anyons.
\newblock {\em Annals of physics}, 303(1):2--30, 2003.

\bibitem{dennis2002topological}
Eric Dennis, Alexei Kitaev, Andrew Landahl, and John Preskill.
\newblock Topological quantum memory.
\newblock {\em Journal of Mathematical Physics}, 43(9):4452--4505, 2002.

\bibitem{pastawski2015holographic}
Fernando Pastawski, Beni Yoshida, Daniel Harlow, and John Preskill.
\newblock Holographic quantum error-correcting codes: Toy models for the bulk/boundary correspondence.
\newblock {\em Journal of High Energy Physics}, 2015(6):1--55, 2015.

\bibitem{almheiri2015bulk}
Ahmed Almheiri, Xi~Dong, and Daniel Harlow.
\newblock Bulk locality and quantum error correction in ads/cft.
\newblock {\em Journal of High Energy Physics}, 2015(4):1--34, 2015.

\bibitem{freedman2013quantum}
Michael~H. Freedman and Matthew~B. Hastings.
\newblock Quantum systems on non-k-hyperfinite complexes: a generalization of classical statistical mechanics on expander graphs.
\newblock {\em Quantum Info. Comput.}, 14(1–2):144–180, 2014.

\bibitem{anshu2023nlts}
Anurag Anshu, Nikolas~P Breuckmann, and Chinmay Nirkhe.
\newblock {NLTS} hamiltonians from good quantum codes.
\newblock In {\em Proceedings of the 55th Annual ACM Symposium on Theory of Computing}, pages 1090--1096, 2023.

\bibitem{panteleev2022asymptotically}
Pavel Panteleev and Gleb Kalachev.
\newblock Asymptotically good quantum and locally testable classical {LDPC} codes.
\newblock In {\em Proceedings of the 54th Annual ACM SIGACT Symposium on Theory of Computing}, pages 375--388, 2022.

\bibitem{nielsen2010quantum}
Michael~A Nielsen and Isaac~L Chuang.
\newblock {\em Quantum computation and quantum information}.
\newblock Cambridge university press, 2010.

\bibitem{Note1}
Indeed, if $d\ge 2$ then all logical states must have the same single-qubit marginals. However, any state sharing the same single-qubit marginals with an unentangled (product) state must be equal to the unentangled state itself. Hence, a code with $k\ge 1$ and $d\ge 2$ cannot have an unentangled logical state.

\bibitem{shimony1995degree}
Abner Shimony.
\newblock Degree of entanglement.
\newblock {\em Annals of the New York Academy of Sciences}, 755(1):675--679, 1995.

\bibitem{barnum2001monotones}
Howard Barnum and Noah Linden.
\newblock Monotones and invariants for multi-particle quantumstates.
\newblock {\em Journal of Physics A: Mathematical and General}, 34(35):6787, 2001.

\bibitem{wei2003geometric}
Tzu-Chieh Wei and Paul~M Goldbart.
\newblock Geometric measure of entanglement and applications to bipartite and multipartite quantum states.
\newblock {\em Physical Review A}, 68(4):042307, 2003.

\bibitem{orus2014geometric}
Rom{\'a}n Or{\'u}s, Tzu-Chieh Wei, Oliver Buerschaper, and Maarten Van~den Nest.
\newblock Geometric entanglement in topologically ordered states.
\newblock {\em New Journal of Physics}, 16(1):013015, 2014.

\bibitem{bravyi2010topological}
Sergey Bravyi, Matthew~B Hastings, and Spyridon Michalakis.
\newblock Topological quantum order: stability under local perturbations.
\newblock {\em Journal of mathematical physics}, 51(9), 2010.

\bibitem{bravyi2010tradeoffs}
Sergey Bravyi, David Poulin, and Barbara Terhal.
\newblock Tradeoffs for reliable quantum information storage in 2d systems.
\newblock {\em Physical review letters}, 104(5):050503, 2010.

\bibitem{bravyi2023classical}
Sergey Bravyi, David Gosset, and Yinchen Liu.
\newblock Classical simulation of peaked shallow quantum circuits.
\newblock In {\em Proceedings of the 56th Annual ACM Symposium on Theory of Computing}, STOC 2024, page 561–572, New York, NY, USA, 2024. Association for Computing Machinery.

\bibitem{gottesman1998theory}
Daniel Gottesman.
\newblock Theory of fault-tolerant quantum computation.
\newblock {\em Physical Review A}, 57(1):127, 1998.

\bibitem{supp}
See Supplemental Material, which includes Refs.~\cite{janson2004large,movassagh2020constructing}, for detailed proofs of the theorems in the main text, as well as an additional theorem stating that the number of terms in the expansion of any code state is at least $2^d$.

\bibitem{janson2004large}
Svante Janson.
\newblock Large deviations for sums of partly dependent random variables.
\newblock {\em Random Structures \& Algorithms}, 24(3):234--248, 2004.

\bibitem{movassagh2020constructing}
Ramis Movassagh and Yingkai Ouyang.
\newblock Constructing quantum codes from any classical code and their embedding in ground space of local hamiltonians.
\newblock {\em Quantum}, 8:1541, 2024.

\bibitem{bravyi2009no}
Sergey Bravyi and Barbara Terhal.
\newblock A no-go theorem for a two-dimensional self-correcting quantum memory based on stabilizer codes.
\newblock {\em New Journal of Physics}, 11(4):043029, 2009.

\bibitem{anshu2020circuit}
Anurag Anshu and Chinmay Nirkhe.
\newblock {Circuit Lower Bounds for Low-Energy States of Quantum Code Hamiltonians}.
\newblock In {\em 13th Innovations in Theoretical Computer Science Conference (ITCS 2022)}, volume 215 of {\em Leibniz International Proceedings in Informatics (LIPIcs)}, pages 6:1--6:22, Dagstuhl, Germany, 2022. Schloss Dagstuhl -- Leibniz-Zentrum f{\"u}r Informatik.

\bibitem{pollatsek2004permutationally}
Harriet Pollatsek and Mary~Beth Ruskai.
\newblock Permutationally invariant codes for quantum error correction.
\newblock {\em Linear algebra and its applications}, 392:255--288, 2004.

\bibitem{ouyang2014permutation}
Yingkai Ouyang.
\newblock Permutation-invariant quantum codes.
\newblock {\em Physical Review A}, 90(6):062317, 2014.

\bibitem{aharonov2018quantum}
Dorit Aharonov and Yonathan Touati.
\newblock Quantum circuit depth lower bounds for homological codes.
\newblock {\em arXiv preprint arXiv:1810.03912}, 2018.

\bibitem{eldar2017local}
Lior Eldar and Aram~W Harrow.
\newblock Local hamiltonians whose ground states are hard to approximate.
\newblock In {\em 2017 IEEE 58th annual symposium on foundations of computer science (FOCS)}, pages 427--438. IEEE, 2017.

\bibitem{botero2007scaling}
Alonso Botero and Benni Reznik.
\newblock Scaling and universality of multipartite entanglement at criticality.
\newblock {\em arXiv preprint arXiv:0708.3391}, 2007.

\bibitem{orus2008universal}
Rom{\'a}n Or{\'u}s.
\newblock Universal geometric entanglement close to quantum phase transitions.
\newblock {\em Physical review letters}, 100(13):130502, 2008.

\bibitem{orus2008geometric}
Rom{\'a}n Or{\'u}s.
\newblock Geometric entanglement in a one-dimensional valence-bond solid state.
\newblock {\em Physical Review A}, 78(6):062332, 2008.

\bibitem{wei2010entanglement}
Tzu-Chieh Wei.
\newblock Entanglement under the renormalization-group transformations on quantum states and in quantum phase transitions.
\newblock {\em Physical Review A}, 81(6):062313, 2010.

\bibitem{li2024much}
Zhi Li, Dongjin Lee, and Beni Yoshida.
\newblock How much entanglement is needed for topological codes and mixed states with anomalous symmetry?
\newblock {\em Physical Review X}, 15(2):021090, 2025.

\end{thebibliography}
	
\appendix
	\section{Shor's code}
	\label{app:Shor}
	
	Distance-$d$ Shor's code~\cite{shor1995scheme}
	is a stabilizer code defined on a two-dimensional grid of qubits of 
	size $d\times d$.  It encodes one logical qubit into $n=d^2$ physical qubits.
	The code has Pauli stabilizers $Z_{i,j}Z_{i+1,j}$ acting on pairs of adjacent qubits located in the same column and $\prod_{i=1}^d X_{i,j}X_{i,j+1}$ acting on pairs of adjacent columns. 
	Logical Pauli operators can be chosen as $X_L=\prod_{i=1}^d X_{i,1}$ along a column and $Z_L=\prod_{j=1}^d Z_{1,j}$ along a row.
	
	It is well known and can be easily checked that the code space is spanned by
	\begin{equation}
		\begin{aligned}
			\ket{\psi_0}&=\Big(\frac{|0^d\ra+|1^d\ra}{\sqrt{2}}\Big)^{\otimes d},\\
			\ket{\psi_1}&=\Big(\frac{|0^d\ra-|1^d\ra}{\sqrt{2}}\Big)^{\otimes d},
		\end{aligned}
	\end{equation}
	where each $(|0^d\ra\pm|1^d\ra)/\sqrt{2}$ is a GHZ state associated with a column of the grid.
	Then $|\psi_+\ra =(\ket{\psi_0}+\ket{\psi_1})/\sqrt{2}$ is a normalized logical state and 
	$E_0(\psi_+) \le -\log_2{|\la \psi_+|0^n\ra|^2}= d-1$ which matches the lower bound of Theorem~\ref{thm:stabilizer}.

	\section{Bit string distribution of qLDPC codes}\label{app:qLDPCproof}
	In this appendix, we prove Lemma~\ref{lemma:x} regarding the bit string distributions corresponding to a logical state of a qLDPC code in the computational basis.
	
	\begin{proof}
		{\em (i)}  We claim that any logical state $|\psi\ra\in \calC$
		has zero correlation length.  More precisely, 
		consider any subset of qubits $M\subseteq [n]$ 
		and let $M^c=[n]\setminus M$ be the complement of $M$.
		Let $\partial M$ be the boundary of $M$ defined as the set of 
		qubits covered by supports of projectors $\Pi_a$ that overlap
		with both $M$ and $M^c$.
		The Disentangling Lemma of Ref.~\cite{bravyi2010tradeoffs} implies that 
		if $|M|<d$ and $|M^c \cap \partial M|<d$ then 
		\be
		\label{LDPC_eq5}
		\mathrm{Tr}_{\partial M}  |\psi\ra\la \psi| = \rho_{M\setminus \partial M}  \otimes 
		\rho_{M^c \setminus \partial M},
		\ee
		where $\rho_A$ is  the reduced density matrix of $|\psi\ra\la \psi|$
		describing a subset of qubits $A$.
		Consider any qubit $j$. We choose $M$ as the union of $j$
		and all neighbors of $j$ in the ``interaction graph" determined by the projectors $\Pi_a$.
		Then $j\in M\setminus \partial M$ and Eq.~(\ref{LDPC_eq5}) implies  that measuring $|\psi\ra$ in the standard basis,
		the $j$-th measured bit 
		is independent of all bits in $M^c\setminus \partial M$. The number of remaining bits
		that can possibly be correlated with the $j$-th bit
		is at most $|M\cup \partial M|-1$. 
		A simple calculation shows that  $|M\cup \partial M|\le s^2+s^4$ while conditions of the Disentangling Lemma are satisfied whenever $d>s^4$.
		
		\noindent
		{\em (ii)} The proof is based on~\cite{bravyi2023classical}. Define a dependency graph $G=(V,E)$
		with $n$ vertices such that the $i$-th bit of $x$ can be correlated only with the
		nearest neighbors of $i$ in $G$. We have already shown that the vertex
		degree of $G$ is at most $K$. If $V_I\subset V$ is an independent set of vertices
		then all bits $x_j$ with $j\in V_I$ are independent. Let $m_j=\EE(x_j)$.
		Then 
		\begin{align}
			|\la \psi|0^n\ra|^2 
			& \le \mathrm{Pr}[x_j=0~~\forall j\in V_I] \nonumber\\
			&= \prod_{j\in V_I} \mathrm{Pr}[x_j=0] =  \prod_{j\in V_I} (1-m_j)\nonumber\\
			&\le \exp{(-\sum_{j\in V_I} m_j)}.\label{LDPC_eq6}
		\end{align}
		Using a simple greedy algorithm, one can show that any graph with the maximum vertex degree $K$ and vertex weights $m_j \in [0,1]$
		has an independent set $V_I$ such that
		\be
		\label{LDPC_eq7}
		\sum_{j\in V_I} m_j \ge \frac1{K+1} \sum_{j=1}^n m_j = \frac{\EE{(|x|)}}{K+1}.
		\ee
		Combining Eqs.~(\ref{LDPC_eq6},\ref{LDPC_eq7}) gives the desired lower bound
		on $\EE{(|x|)}$.

		\noindent
		{\em (iii)} 
		Theorem 2.3 of \cite{janson2004large} gives the following concentration bound for a sum of partly dependent random variables $x_i$ and all $t\ge 0$:
		\begin{equation}
			\Pr[|x|\geq \EE{(|x|)}+t]\leq \exp{\left[ -\frac{S}{K+1}\phi\left(\frac{4t}{5S}\right)\right]}.
		\end{equation}
		Here $\phi(x)=(1+x)\ln(1+x)-x$, and $S=\sum_i \text{Var}(x_i)\leq \EE{(|x|)}$. 
		Since the function $\phi(x)$ is superlinear in $x$, there exists $x_0=x_0(s)$ such that $\phi(x)>\frac{5}{4}(K+1)x$ when $x>x_0$. 
		Hence, if $t>\frac{5}{4}x_0\EE{(|x|)}$, then 
		\begin{equation}
			\Pr[|x|\geq \EE{(|x|)}+t]\leq \exp(-t).
		\end{equation}
		Now we replace $t$ with $t-\EE{(|x|)}$ and set $c(s)=\frac{5}{4}x_0(s)+1$. 
		A simple algebra gives $x_0(s)=\exp{[1+(5/4)(K+1)]}$.
		Accordingly, the function $g(s)$ defined in
		Eq.~(\ref{LDPC_eq1}) can be chosen as
		$g(s)=1/(\ln{2} (K+1)c(s))$ with $K=s^2+s^4$. 
	\end{proof}

	\section{Proof of Lemma~\ref{lemma:codeentropybound}}\label{app-proofentropy}
	
	Here we prove Lemma \ref{lemma:codeentropybound} in the main text, copied below for convenience:
	\begin{lemmacopy}{lemma:codeentropybound}
        For any code state $\ket{\psi}$ in a quantum code $\mathcal{C}$ with code distance $d \geq 2$, there exists at least one qubit $q$ such that the reduced state $\rho_q = \operatorname{Tr}_{\bar{q}} \ketbra{\psi}$ has von Neumann entropy satisfying $S_q\geq k/n$. Here $k=\log_2(\dim\cal C)$.
	\end{lemmacopy}
	\begin{proof}
		We denote the von Neumann entropy of the reduced state of $\ketbra{\psi}$ on qubit $i$ as $S_i$.
		Consider the maximally mixed state in the code space $\cal C$, denoted by $\rho$. By definition, $S(\rho)=k$.
		
		Due to the local indistinguishability, $S_i$ can be equally computed by reducing $\rho$ to qubit $i$. 
		The subadditivity of entropy then implies
		\begin{equation}
			k=S(\rho)\leq\sum_{i=1}^n S_{i}.
		\end{equation}
		Hence, there exists at least one qubit $q$ such that $S_q\geq k/n$.
	\end{proof}

	\section{Measuring a Quantum Code}\label{app-measurecode}
	
	In the proof of Theorem \ref{thm:overlapgeneralcode}, we considered an inductive procedure to measure $d-1$ qubits. It is based on the following:
	\begin{lemmaprime}{lemma:postselectcode}
		Given a code with distance $d$, postselecting $m<d$ qubits either (1) annihilates all states, or (2) results in a code with the same dimension and code distance $d'\geq d-m$.
	\end{lemmaprime}
	\begin{proof}
		Denote $\{\ket{\psi_i}\}$ as a orthonormal basis of the code space $\mathcal{C}$; denote $P$ as the postselection operator. By the Knill-Laflamme (KL) condition \cite{knill1997theory}, we have
		\begin{equation}
			\bra{\psi_i}P^\dagger P\ket{\psi_j}=c(P^\dagger P)\delta_{ij}.
		\end{equation}
		If $P$ does not annihilate $\mathcal{C}$, then $c(P^\dagger P)>0$. The space after projection, $\mathcal{C}'$, is spanned by $\{P\ket{\psi_i}\}$, for which an orthonormal basis is $\{P\ket{\psi_i}/\sqrt{c(P^\dagger P)}\}$, hence $\dim \mathcal{C}'=\dim \mathcal{C}$.
		
		For any operator $\mathcal O$ applied on non-postselected $(d-m-1)$ qubits, $P^\dagger\mathcal{O}P$ applies on at most $(d-1)$ qubits. Therefore, by KL condition (applied to $\ket{\psi_i}$),
		\begin{equation}
			\bra{\psi_i}P^\dagger\mathcal{O}P\ket{\psi_j}=c(P^\dagger\mathcal{O}P)\delta_{ij}\defeq c'(\mathcal{O})\delta_{ij}.
		\end{equation}
		Hence, again due to KL condition (applied to $P\ket{\psi_i}$), $\mathcal{C}'$ is a code with code distance at least $d-m$.
	\end{proof}
	As a simple corollary, we prove a lower bound regarding the number of terms in the wavefunction of any logical state. 
	\begin{theorem}\label{thm:expandnumber}
		Expanding any logical state of a distance-$d$ code in any computational basis, the number of terms must be at least $2^{d-1}$.
	\end{theorem}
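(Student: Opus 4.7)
The plan is to prove Theorem~\ref{thm:expandnumber} by induction on the number $n$ of physical qubits, with the code distance $d$ and the logical dimension carried along as parameters. The base case is $d=1$ (which is forced when $n=1$), where the claim $2^{d-1}=1$ reduces to the trivial statement that a nonzero vector has at least one nonzero amplitude in any basis. For the inductive step with $d\ge 2$, I focus on a single qubit, say qubit~1, and split the argument according to the computational-basis measurement statistics of that qubit.

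Whenever $d\ge 2$, local indistinguishability forces the single-qubit reduced state $\rho_1$ to be the same for every logical state, so the outcome probabilities $p_0,p_1$ depend only on the code, not on $|\psi\rangle$. The first case is the degenerate one in which $p_0\in\{0,1\}$: then every logical state factors as $|b\rangle_1\otimes|\phi\rangle$, so the code splits as $|b\rangle_1\otimes\calC'$ for some $\calC'$ on $n-1$ qubits with the same logical dimension. The distance of $\calC'$ is still at least $d$, since any weight-$<d$ operator on qubits $2,\dots,n$ lifts to a weight-$<d$ operator on the full code by tensoring with identity, and Knill--Laflamme is inherited. The inductive hypothesis applied to $\calC'$ then immediately gives at least $2^{d-1}$ nonzero terms for $|\psi\rangle$.

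The main case is the nondegenerate one, $0<p_0<1$, where
\begin{equation*}
|\psi\rangle=\sqrt{p_0}\,|0\rangle|\psi_0\rangle+\sqrt{p_1}\,|1\rangle|\psi_1\rangle
\end{equation*}
with both $|\psi_0\rangle$ and $|\psi_1\rangle$ nonzero. Since neither postselection annihilates the code, Lemma~\ref{lemma:postselectcode} with $m=1$ tells me that each $|\psi_b\rangle$ is a logical state of an $(n-1)$-qubit code of the same logical dimension and distance at least $d-1$. By the inductive hypothesis, the expansions of $|\psi_0\rangle$ and $|\psi_1\rangle$ each carry at least $2^{d-2}$ nonzero terms. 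Because the two contributions sit in orthogonal sectors labeled by the first bit, their nonzero amplitudes cannot cancel but simply add, yielding at least $2\cdot 2^{d-2}=2^{d-1}$ nonzero terms in $|\psi\rangle$.

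The only delicate point is the degenerate Case~A: one has to recognize that a qubit frozen to a computational-basis value across the whole code does not shrink the effective distance, and that the correct induction variable is $n$ rather than $d$ so that the argument still makes progress whenever such frozen qubits appear (indeed, not every qubit can be frozen unless $\dim\calC=1$). Beyond this bookkeeping issue, the proof is a clean recursive application of local indistinguishability together with the Knill--Laflamme-based Lemma~\ref{lemma:postselectcode}, and no machinery beyond what has already been developed in the paper is required.
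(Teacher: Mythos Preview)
Your proof is correct and follows essentially the same recursive mechanism as the paper's: split a logical state on one qubit, apply Lemma~\ref{lemma:postselectcode} to each branch, and combine $2^{d-2}+2^{d-2}=2^{d-1}$. The only organizational difference is that the paper inducts on $d$ and uses its $d=2$ base case to guarantee that some qubit has a mixed reduced state, whereas you induct on $n$, fix an arbitrary qubit, and explicitly dispose of the frozen-qubit possibility (your Case~A) by noting that peeling off a pure tensor factor preserves the distance---both routes are valid and of comparable length.
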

	\begin{proof} 
		We prove the theorem by induction on $d$. First, $d=2$ case. $\forall \ket{\psi}, \ket{\phi}\in\cal C$, for each qubit $i$, we have $(\ketbra{\psi})_i=(\ketbra{\phi})_i$. If $\ket{\psi}$ is a product state, we must have $\ket{\phi}=\ket{\psi}$. Therefore, in a code ($k\geq 1$), no logical state can be a product state, implying the number of terms must be at least 2.
		
		Assuming we have proved the $(d-1)$ case, let us consider distance-$d$ codes. For any logical state $\ket{\psi}$, using the $d=2$ case, we know there exists a qubit, say the first qubit, that is not pure if we trace out its complement. We can then expand $\ket{\psi}$ as
		\begin{equation}
			\ket{\psi}=\lambda_0\ket{0}\ket{\psi_0}+\lambda_1\ket{1}\ket{\psi_1},
		\end{equation}
		where $\lambda_i\neq 0$.
		Postselecting on $\ket{0}$, we know from Lemma \ref{lemma:postselectcode} that $\ket{\psi_0}$ is a logical state in a code of distance $d'\geq d-1$, hence must have at least $2^{d-2}$ terms by induction. The same applies to $\ket{\psi_1}$. 
		Therefore, the number of terms in $\ket{\psi}$ must be at least $2^{d-2}+2^{d-2}=2^{d-1}$.
	\end{proof}
	
	This bound is also tight, as it can be saturated by Shor's code as shown in Appendix \ref{app:Shor}.

	\section{Concatenation of PI codes}
	\label{app:PI}
	
	Let us check that the permutation invariant code $\calC$ defined by Eqs.~(\ref{logical_basis0},\ref{logical_basis1}) has distance $d=2$.
	Suppose $\sigma \in \{X_i,Y_i,Z_i\}$ is a single-qubit error. 
	The assumption $n\ge 4$ implies $\la \psi_0|\sigma|\psi_1\ra=0$. Thus, it suffices to check that 
	\be
	\label{KLPI}
	\la \psi_0 |\sigma|\psi_0\ra = \la \psi_1|\sigma|\psi_1\ra.
	\ee
	If $\sigma \in \{X_i,Y_i\}$ then both sides of Eq.~(\ref{KLPI}) are zero.
	Suppose $\sigma=Z_i$. We have
	\be
	\la \psi_0|Z_i|\psi_0\ra = (1-\frac{2}{n}) - \frac{2}{n} = 1-\frac{4}{n}
	\ee
	and
	\begin{equation}
		\begin{aligned}
			\la \psi_1|Z_i|\psi_1\ra &= \frac2{n(n-1)}\left( {n-1 \choose 2} - {n-1\choose 1}\right) \\
			&=1-\frac{4}{n}.   
		\end{aligned}
	\end{equation}
	Here the terms $ {n-1 \choose 2}$ and ${n-1\choose 1}$ count $n$-bit strings $x$ with $|x|=2$
	such that $x_i=0$ and $x_i=1$ respectively.
	This proves Eq.~(\ref{KLPI}).
	As a side remark, we note that the code $\calC$ is a special case of a general 
	construction that maps classical  (non-linear) codes to quantum codes proposed in~\cite{movassagh2020constructing}. In particular, the quantum code $\calC$ is constructed from a classical distance-$2$ length-$n$ code whose codewords are
	$n$-bit strings $x$ with $|x|\in \{0,2,n\}$ by applying the framework of~\cite{movassagh2020constructing}.
	
	Next, let us show how to improve the distance of $\calC$ by concatenation.
	We will show that the concatenated code has logical states
	with the geometric entanglement approaching zero in the limit of large $n$. 
	Let $\calC_n$ be the distance-$2$ 
	code with one logical qubit and $n$ physical qubits defined by Eqs.~(\ref{logical_basis0},\ref{logical_basis1}).
	For any integer sequence $n_1,n_2,\ldots,n_\ell$ let
	$\calC$ be the concatenation of codes $\calC_{n_1},\ldots,\calC_{n_\ell}$, where $\calC_{n_1}$ is at the lowest
	and $\calC_{n_\ell}$ is at the highest  level of concatenation. In other words, each consecutive block of $n_1$ physical qubits
	defines a level-$1$ logical qubit encoded by $\calC_{n_1}$, each consecutive block of $n_2$ level-$1$ logical qubits
	defines a level-$2$ logical qubit encoded by $\calC_{n_2}$, etc. 
	The code $\calC$ has one logical qubit, $N_\ell=\prod_{i=1}^\ell n_i$ physical qubits, and distance $d_\ell = 2^\ell$,
	assuming that $n_i\ge 4$ for all $i$.
	Let $F_\ell$ be the overlap between the logical-$0$ state of the level-$\ell$ logical qubit and the product state $|0^{N_\ell}\ra$.
	From Eq.~(\ref{logical_basis0}) one easily gets
	\be
	F_{i} = \left(1-\frac2{n_i}\right) (F_{i-1})^{n_i}
	\ee
	with $i=1,\ldots,\ell$ and $F_0\equiv 1$. 
	Thus
	\be
	F_\ell = \prod_{i=1}^\ell  \left(1-\frac2{n_i}\right)^{n_{i+1}n_{i+2} \cdots n_\ell},
	\ee
	(for $i=\ell$, $n_{i+1}n_{i+2} \cdots n_\ell=1$).
	
	Let $d\ge 2$ be the desired code distance and $\ell =\lceil \log_2{(d)} \rceil$.
	Let us choose $\{n_i\}$ properly to ensure that the geometric entanglement approaches zero. 
	Let $M\ge 1$ be a large integer.
	Set $n_\ell = 2M$ and 
	\be
	\label{n_i}
	n_i =2M n_{i+1}n_{i+2} \cdots n_\ell 
	\ee
	for all $i\in [1,\ell-1]$ so that
	\be
	F_\ell = \prod_{i=1}^\ell \left(1-\frac1{M n_{i+1}n_{i+2} \cdots n_\ell} \right)^{ n_{i+1}n_{i+2} \cdots n_\ell }.
	\ee
	One can easily check that $(1 - 1/(Mx))^x$ is a monotone increasing function of $x$  for $x\ge 1$.
	Thus
	\be
	\label{F_ell}
	F_\ell \ge \left(1-\frac1M\right)^\ell.
	\ee
	Denoting $|\psi_0^\ell\ra$ the logical-$0$ state of the level-$\ell$ logical qubit, we conclude that 
	\be
	E_0(\psi_0^\ell) \le -\log_2(F_\ell)
	= O(\ell/M)=O(2^d/M).
	\ee
	Thus, for any constant $d\ge 2$ there exists a family of codes (by increasing $M$) such that $E_0(\psi_0^\ell)\to 0$.
	
	Solving Eq.~(\ref{n_i}) gives $n_i=(2M)^{2^{\ell-i}}$, thus the number of physical qubits in the resulting distance-$d$ code $\calC$ equals
	\be
	N_\ell = \prod_{i=1}^\ell n_i = (2M)^{2^\ell-1}<(2M)^{2d}.
	\ee
	
	To construct a code with $k$ logical qubits, we simply take $k$ independent copies of the concatenated code. 
	Equivalently, we concatenate it further with the trivial $[[k,k,1]]$ code.
	An example of a logical state is $|\psi\ra=|\psi_0^\ell\ra^{\otimes k}$.
	The geometric entanglement $E_0(\psi)\le O(k2^d/M)$, which can also be made arbitrarily small by choosing a large enough $M$.

	Finally, let us point out that some distance-$d$ PI codes exhibit GEM scaling at most logarithmically with $d$ (without concatenation).
	Indeed, let $d\ge 3$ be an odd integer and $n=d^2$. Following Ref.~\cite{ouyang2014permutation}, consider
	a PI code encoding $k=1$ logical qubit into $n$ physical qubits such that the code space has an orthonormal basis 
	\be
	|\psi_{\pm}\ra = \frac1{\sqrt{2^d}} \sum_{\ell=0}^d (\pm 1)^\ell \sqrt{{d \choose \ell}} |D^n_{d\ell}\ra,
	\ee
	where $|D^n_m\ra$ is the $n$-qubit Dicke state with $m$ particles,
	\be
	|D^n_m\ra = \frac1{\sqrt{n \choose m}} \sum_{\substack{x\in \{0,1\}^n \\ |x|=m\\}} \; |x\ra.
	\ee
	Theorem 4 of~\cite{ouyang2014permutation} proves that 
	$\calC=\mathrm{span}(\psi_+,\psi_-)$
	is a distance-$d$ code. 
	Simple algebra gives
	\begin{equation}
		\la +^{\otimes n} |\psi_+\ra = \frac1{\sqrt{2^{d+n}}} \sum_{\ell=0}^d \sqrt{
			{d \choose \ell} {n \choose d\ell}}.
	\end{equation}
	The sum over $\ell$ is lower bounded by a single
	term with $\ell=(d-1)/2$. 
	It is well known that 
	\begin{equation}
		{n \choose m} \ge 2^{nH(m/n)} \sqrt{\frac{n}{8m(n-m)}}.
	\end{equation}
	for all  $n\ge 2$ and $m\in [1,n-1]$.
	Noting that $H(1/2-x)\ge 1-O(x^2)$ one gets
	\begin{equation}
		{d \choose (d-1)/2} \ge \Omega\Big(\frac{2^d}{\sqrt{d}}\Big)
	\end{equation}
	and
	\begin{equation}
		{n \choose d(d-1)/2} \ge \Omega\Big(\frac{2^n}{d}\Big).
	\end{equation}
	It follows that 
	$\la +^{\otimes n} |\psi_+\ra  \ge \Omega( d^{-3/4})$ and thus $E_0(\psi_+)\le -\log_2{|\la +^{\otimes n}|\psi_+\ra|^2}=O(\ln{d})$.

\end{document}